\pgfplotsset{compat=1.17}
\newtheorem{theorem}{Theorem}[]
\newtheorem{lemma}[theorem]{Lemma}
\newtheorem{proposition}{Proposition}
\DeclareAcronym{los}{
short=LoS,
long= line-of-sight,
}
\DeclareAcronym{mmimo}{
short=mMIMO,
long= massive MIMO,
}
\DeclareAcronym{mimo}{
short=MIMO,
long= multiple-input multiple-output,
}
\DeclareAcronym{ap}{
short=AP,
long= access point,
}
\DeclareAcronym{5g}{
short=5G,
long= fifth generation network,
}
\DeclareAcronym{ofdm}{
short=OFDM,
long= orthogonal frequency division multiplexing,
}
\DeclareAcronym{cpu}{
short=CPU,
long= central processing unit,
}
\DeclareAcronym{pn}{
short=PN,
long= phase noise,
}
\DeclareAcronym{ue}{
short=UE,
long= user equipment,
}
\DeclareAcronym{lo}{
short=LO,
long= local oscillator,
}
\DeclareAcronym{se}{
short=SE,
long= spectral efficiency,
}
\DeclareAcronym{sinr}{
short=SINR,
long= signal-to-interference-and-noise ratio,
}
\DeclareAcronym{uatf}{
short=UatF,
long= use-and-then-forget,
}
\DeclareAcronym{lmmse}{
short=LMMSE,
long= linear minimum mean square error,
}
\DeclareAcronym{mmse}{
short=MMSE,
long= minimum mean square error,
}
\DeclareAcronym{hwi}{
short=HWI,
long= hardware impairment,
}
\DeclareAcronym{mr}{
short=MR,
long= maximum ratio,
}
\DeclareAcronym{pmmse}{
short=PMMSE,
long= partial MMSE,
}
\DeclareAcronym{ha}{
short=HA,
long= hardware-aware,
}
\DeclareAcronym{fir}{
short=FIR,
long= finite impulse response,
}
\DeclareAcronym{dft}{
short=DFT,
long= discrete Fourier transform,
}
\DeclareAcronym{idft}{
short=IDFT,
long= inverse discrete Fourier transform,
}
\DeclareAcronym{rb}{
short=RB,
long= resource block,
}
\DeclareAcronym{isi}{
short=ISI,
long= intersymbol interference,
}
\DeclareAcronym{ici}{
short=ICI,
long= intercarrier interference,
}
\DeclareAcronym{cp}{
short=CP,
long= cyclic prefix,
}
\DeclareAcronym{cpe}{
short=CPE,
long= common phase error,
}
\DeclareAcronym{hu}{
short=HU,
long= hardware-unaware,
}
\DeclareAcronym{pna}{
short=PN-aware,
long= PN-aware,
}
\DeclareAcronym{pnu}{
short=PN-unaware,
long= PN-unaware,
}
\DeclareAcronym{csi}{
short=CSI,
long= channel state information,
}
\DeclareAcronym{lpmmse}{
short=LP-MMSE,
long= local-partial MMSE,
}
\DeclareAcronym{dcc}{
short=DCC,
long= dynamic cooperation clustering,
}
\begin{document}

\title{Impact of Phase Noise on\\ Uplink Cell-Free Massive MIMO OFDM\\
\thanks{This work was supported by the Swedish Foundation for Strategic Research (SSF), grant no. ID19-0021, and the Gigahertz-ChaseOn Bridge Center at Chalmers in a project financed by Chalmers, Ericsson, and Qamcom. L. Sanguinetti was partially supported by the Italian Ministry of Education and Research (MUR) in the framework of the FoReLab project (Departments of Excellence), and by the Università di Pisa under the "PRA – Progetti di Ricerca di Ateneo" (Institutional Research Grants) - Project no. PRA 2022-2023-91.
  \\
        All the simulation results can be reproduced using the Matlab code that will be available upon acceptance.}
}
\author{Yibo~Wu\IEEEauthorrefmark{1}\IEEEauthorrefmark{2},
Luca~Sanguinetti\IEEEauthorrefmark{3},
    Ulf~Gustavsson\IEEEauthorrefmark{1},
Alexandre~Graell~i~Amat\IEEEauthorrefmark{2}, and
    Henk~Wymeersch\IEEEauthorrefmark{2}\\
        \IEEEauthorrefmark{1}Ericsson Research, Gothenburg, Sweden\\
        \IEEEauthorrefmark{2}Department of Electrical Engineeering, Chalmers University of Technology, Gothenburg, Sweden\\
		\IEEEauthorrefmark{3}Dipartimento di Ingegneria dell’Informazione, University of Pisa, Pisa, Italy
        }
\maketitle

\begin{abstract}
Cell-Free \acl{mmimo} networks provide huge power gains and resolve inter-cell interference by coherent processing over a massive number of distributed instead of co-located antennas in \acfp{ap}. Cost-efficient hardware is preferred but imperfect \aclp{lo} in both \acp{ap} and users introduce multiplicative \acf{pn}, which affects the phase coherence between \acp{ap} and users even with centralized processing. In this paper, we first formulate the system model of a \ac{pn}-impaired uplink Cell-Free \acl{mmimo} \acl{ofdm} network, and then propose a \acl{pna} \acl{lmmse} channel estimator and derive a \ac{pn}-impaired uplink \acl{se} expression. Numerical results are used to quantify the \acl{se} gain of the proposed channel estimator over alternative schemes for different receiving combiners.
\end{abstract}
\begin{IEEEkeywords}
    Cell-Free massive OFDM MIMO, hardware impairments, phase noise, channel estimation, spectral efficiency.
\end{IEEEkeywords}

\section{Introduction}
\Acl{mmimo} offers phenomenal received power gains by coherently transmitting a signal over multiple antennas without increasing the transmit power~\cite{bjornson2016massive}. This coherent transmission can be implemented in two ways, classified by the deployment of antennas: deploying  co-located antennas leads to cellular  network~\cite{bjornson2017massive}, while deploying distributed antennas over \acp{ap} leads to  cell-free  network~\cite{nayebi2017precoding,ngo2017cell,demir2021foundations}. 
Coherent transmissions in  cell-free networks relies on both time and phase synchronization among \acp{ap}~\cite{zheng2023asynchronous}. Even if  there is a centralized \ac{cpu} that synchronizes all \acp{ap}, the imperfect \acp{lo} at both \acp{ap} and \acp{ue} introduce different \ac{pn} that varies over time, which unavoidably affects the transmission coherence and reduces the  power gain~\cite{bjornson2015massive}. Using better quality \acp{lo} can alleviate the \ac{pn} problem, while the corresponding hardware cost scales up with up to hundreds of \acp{lo} in \acp{ap} and \acp{ue}. Thus, to design an economical and reliable cell-free  network, it is important to evaluate the relation of the \ac{pn} impact and the \ac{lo} quality.

There exist several works investing the \ac{pn} impact on either cellular or cell-free  massive MIMO networks~\cite{bjornson2015massive,papazafeiropoulos2021scalable,zheng2023asynchronous,jin2020spectral,ozdogan2019performance,pitarokoilis2016performance}. However, the vast majority of these works  model the \ac{pn} in a single-carrier fashion, while most modern communication systems utilize \ac{ofdm}. The loss of orthogonality between \ac{ofdm} subcarriers in the presence of \ac{pn} is ignored by the conventional single-carrier models. This may have an impact on the design of the network. For example, it may result into a mismatched channel estimator and a mismatched combining/precoding scheme. The authors in~\cite{pitarokoilis2016performance} studied the \ac{pn} impact on the uplink achievable \acf{se} in an \ac{ofdm} system under the assumption of perfect \acl{csi}, while a practical \acl{pna} channel estimation and the corresponding achievable \ac{se} are not studied.

The aim of this paper is to evaluate the impact of \ac{pn} in the uplink of cell-free \ac{ofdm} \acl{mmimo}. To this end, we first provide the \ac{pn}-impaired system model, which correctly models the impact of the time-domain \ac{pn} on any subcarrier of the received frequency-domain OFDM symbols. The model is then used to derive the \acl{pna} \ac{lmmse} channel estimation scheme and a novel uplink achievable \ac{se} considering the \ac{ici} caused by the \ac{pn}. Numerical results are used to evaluate the \ac{se} and to show that the proposed channel estimator provides higher \ac{se} compared to  both \acl{pna} and \acl{pnu} estimators stemmed from single-carrier systems.

\subsubsection*{{Notation}}
Lowercase and uppercase boldface letters, $\boldsymbol{x}$ and $\boldsymbol{X}$, denote column vectors and matrices respectively. The superscripts $^{\mathsf{T}}$, $^*$, $^{\mathsf{H}}$ and $^{\dagger}$ denote transpose, conjugate, conjugate transpose, and pseudo-inverse, respectively. Variables with the $\check{}$ mark, e.g., $\check{x}$, represents that they are in time-domain. The $n \times n$ identity matrix is $\mathbf{I}_n$. We use $\triangleq$ for definitions and $\text{diag}(\boldsymbol{x})$ for a diagonal matrix with $\boldsymbol{x}$ on the diagonal. The expected value of $\boldsymbol{x}$ is denoted by $\mathbb{E}\{\boldsymbol{x}\}$.

\section{System Model}
We consider a cell-free  \ac{ofdm} network comprising $L$ randomly distributed single-antenna \acp{ap}, which are connected to a \ac{cpu} via a fronthaul network and serve $K$ single-antenna \acp{ue}. Each \ac{ofdm} symbol consists of $N$ subcarriers with spacing $\Delta_f$. A \ac{cp} length of $N_{\text{CP}}$ is considered. The signal bandwidth is $W= N\Delta_f$ so that the sampling time is $T_s= 1/W$. The \ac{ofdm} symbol time is $T=NT_s=1/\Delta_f$.

\subsection{Block Fading Channel Model}
The time-domain channel between \ac{ue} $k$ and \ac{ap} $l$ is modeled as a $Q$-tap \acl{fir} filter $\check{\boldsymbol{h}}_{k,l} = [\check{h}_{k,l,0}, \cdots, \check{h}_{k,l,Q-1}]^{\mathsf{T}}$, where the filter length $Q$ is no longer than the multi-path delay spread $T_d$ normalized by the sampling time, i.e., $Q\leq \lceil {T_d}/{T_s} \rceil $. The corresponding channel in frequency-domain can be obtained by an $N$-point \ac{dft} on $\check{\boldsymbol{h}}_{k,l}$. This yields a correlated channel in the frequency domain. In this paper, we neglect this correlation but assume that the time-frequency resources are divided into coherence blocks where each channel is time-invariant and frequency-flat, considering the standard TDD multicarrier protocol of a canonical \acl{mmimo} network from~\cite{bjornson2017massive}. Each coherence block has a coherence time $T_c = \tau_c T$ and a coherence bandwidth $W_c=N_{c}\Delta f$, i.e., $\tau_c$ OFDM symbols and $N_c$ subcarriers. In total, the number of coherent channel uses is $W_cT_c=\tau_c N_c$, consisting of $\tau_p$ and $(\tau_c N_c- \tau_p) $ channel uses for pilot and data, respectively. Subcarriers in each \ac{ofdm} symbol are split by $R=\lceil N/N_c \rceil$ coherence blocks, where the subcarrier set in the coherence block $r \in \{1,\cdots,R\}$ is denoted by $\mathcal{R}_r=\{(r-1)N_{c},\cdots, rN_{c}-1 \}$. For an arbitrary coherence block $r$, the frequency-domain channel between UE $k$ and AP $l$ over subcarrier $n\in \mathcal{R}_r$ is denoted by $h_{k,l,n} \sim \mathcal{N}_{\mathbb{C}}(0,\beta_{k,l})$, where $\beta_{k,l}$ represents the large-scale fading coefficient. 
Notice that $h_{k,l,n_1} = h_{k,l,n_2}$, for ${n_1, n_2} \in \mathcal{R}_r$, while $\mathbb{E}\{h_{k,l,n_1} h_{k,l,n_2}\} = \mathbb{E}\{h_{k_1,l,n} h_{k_2,l,n}\}=0$, for $n_1, n_2 \notin \mathcal{R}_r$ and $k_1\neq k_2$.
\subsection{Phase Noise Model}
\vspace*{-0.cm}
Imperfect \acp{lo} at APs and UEs introduce \ac{pn}. The \ac{pn} $\check{\phi}_{l, m}^{(\tau)}$ and $\check{\varphi}_{k, m}^{(\tau)}$ from the \acp{lo} of AP $l$ and UE $k$, respectively, at time sample $m$ of OFDM symbol $\tau$, can be modeled as discrete-time Wiener processes~\cite{petrovic2007effects},
\begin{align}
\check{\phi}_{l, m}^{(\tau)}  = \check{\phi}_{l, m-1}^{(\tau)}+ \check{\delta}_m^{\phi},\,\, 
\check{\varphi}_{k, m}^{(\tau)}  =\check{\varphi}_{k, m-1}^{(\tau)}+\check{\delta}_m^{\varphi},
\end{align}
where $\check{\delta}_m^\phi\sim \mathcal{N}(0,\sigma_{\phi}^2)$ and $\check{\delta}_n^\varphi\sim \mathcal{N}(0,\sigma_{\varphi}^2)$. The increment variance of the \ac{pn} process is modeled as
    $\sigma_i^2=4 \pi^2 f_{\mathrm{c}}^2 \gamma_i T_{\mathrm{s}}$, for $i\in \{\phi, \varphi\}$, 
where $f_{\mathrm{c}}$ and $\gamma_i$ denote the carrier frequency and a constant describing the oscillator quality. Note that different APs and UEs may have different quality. The uplink received \ac{pn} from UE $k$ and AP $l$ at time-domain sample $m$ of OFDM symbol $\tau$ is combined as $\check{\theta}_{k,l,m}^{(\tau)} = \check{\phi}_{l, m}^{(\tau)} + \check{\varphi}_{k, m}^{(\tau)}$,
and its vector form for the whole OFDM symbol $\tau$ is denoted by $\check{\boldsymbol{\theta}}_{k,l}^{(\tau)}=[\check{\theta}_{k,l,0}^{(\tau)},\cdots,\check{\theta}_{k,l,N-1}^{(\tau)}]^{\mathsf{T}}$. Considering the \ac{cp} length $N_{\text{CP}}$, the phase noise at the first time-domain sample of $(\tau +1)$-th OFDM symbol can be modeled as
$\check{\theta}_{k,l,0}^{(\tau+1)} = \check{\theta}_{k,l,N-1}^{(\tau)} + \check{\delta}^{\text{CP}}$,
where $\check{\delta}^{\text{CP}}\sim \mathcal{N}(0,(N_{\text{CP}}+1)(\sigma_{\phi}^2+\sigma_{\varphi}^2))$.

For any OFDM symbol $\tau$, the multiplicative \ac{pn} exp(${j\check{\boldsymbol{\theta}}_{k,l}^{(\tau)}}) \in \mathbb{C}^N$ in time-domain is equivalent to the convolutional frequency-domain phase-drift vector
$\boldsymbol{J}_{k,l}^{(\tau)}\in \mathbb{C}^{N}$, whose $i$-th entry $J_{k,l,i}^{(\tau)}$, for $i=-N/2,\cdots,N/2-1$, is obtained by~\cite{petrovic2007effects}
\begin{align}
	J_{k,l,i}^{(\tau)} = \frac{1}{N}\sum\limits_{n=0}^{N-1}e^{j\theta^{(\tau)}_{k,l,n}} e^{-j2\pi ni/N}. \label{eq:J_k_l_i_definition}
\end{align} 
For $i=0$, the phase-drift $J_{k,l,0}^{(\tau)}={1}/{N}\sum_{n}e^{j\check{\theta}^{(\tau)}_{k,l,n}}$ is known as the \ac{cpe} ~\cite{petrovic2007effects} since it acts on all subcarriers. The other non-zero phase-drifts $J_{k,l,i}^{(\tau)}$ for $ i\neq 0$ lead to \ac{ici}. The correlation between $J_{k,l,i_1}^{(\tau_1)}$ and $J_{k,l,i_2}^{(\tau_2)}$ is calculated as~\cite{petrovic2007effects}
\begin{align}
&\mathbb{E}\{J_{k,l,i_1}^{(\tau_1)} J_{k,l,i_2}^{*(\tau_2)}\} \triangleq {B}_{i_1,i_2}^{(\tau_1 -\tau_2)}= \label{eq:B_matrix_compute}\\
&\frac{1}{N^2} \sum_{n_1=0}^{N-1}\sum_{n_2=0}^{N-1}  e^{ -\frac{\sigma_{\varphi}^2+\sigma_{\phi}^2}{2}\left(|(\tau_1-\tau_2)N + n_1- n_2 |\right)} e^{\frac{-j2\pi(n_1i_1- n_2i_2)}{N}}.\nonumber 
\end{align}


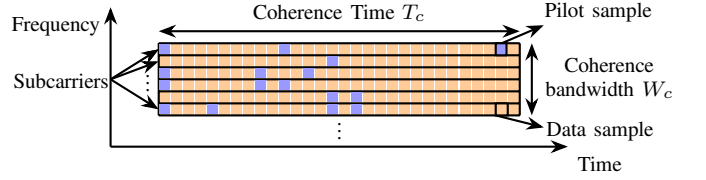
\begin{figure}[t]
    \centering
    \begin{tikzpicture}
[font=\footnotesize, draw=black!100, x=0.32cm,y=0.32cm
]
\usetikzlibrary {arrows.meta}

\draw[thick,-Stealth] (-2,-1.3) -- (17,-1.3) node[anchor=north west] {Time};
\draw[thick,-Stealth] (-2,-1.3) -- (-2,4.5) node[anchor=north east] {Frequency};

\foreach \x in {0,0.5,...,14.5}
    \foreach \y in {0,0.5,...,2.5}
        \fill[orange!40!white] (\x,\y) rectangle (\x+0.45,\y+0.45);

\draw[xstep=15,ystep=0.5,black, thick] (0,0) grid (15,3);

\foreach \xx in {1,1.5,...,10}
 \pgfmathrandominteger{\x}{0}{9}
  \pgfmathrandominteger{\y}{0}{5}
        \fill[blue!40!white] (\x,\y/2+0.05) rectangle (\x+0.452,\y/2+0.452);
        
\filldraw[blue!40!white,draw=black,thick] (14.0, 2.5) rectangle (14.45,3);
\filldraw[orange!40!white,draw=black,thick] (14., 0.0) rectangle (14.5,0.5);

  \draw[Stealth-Stealth,thick] (0,3.5)   -- (15,3.5) node[midway, above] {Coherence Time $T_c$};
    \draw[Stealth-Stealth,thick] (15.5,0)   -- (15.5,3.) node[midway,right=-0.5] {\begin{tabular}{c}
         Coherence\\
         bandwidth $W_c$ 
    \end{tabular}};

\foreach \ysub in{2.75,2.25,0.25}
    \draw[-Stealth,thick] (-2,1.5)   -- (0,\ysub) node[midway, left] {};
\draw[-Stealth] (-2,1.5)   -- (0,2.75) node[midway, below left=0. and 0.65] {Subcarriers};



\draw[-Stealth,thick] (14.25,3)   -- (16,3.75) node[midway, above right=0.1 and 0.5] {Pilot sample};
\draw[-Stealth,thick] (14,0.)   -- (16,-0.5) node[midway, below right=-0.4 and 0.7] {Data sample};

\node[]()at (-0.45,1.8) {\begin{tabular}{c} .\\ \end{tabular}};
\node[]()at (-0.45,1.5) {\begin{tabular}{c} .\\ \end{tabular}};
\node[]()at (-0.45,1.2) {\begin{tabular}{c} .\\ \end{tabular}};

\node[]()at (7.5,-0.2) {\begin{tabular}{c} .\\ \end{tabular}};
\node[]()at (7.5,-0.5) {\begin{tabular}{c} .\\ \end{tabular}};
\node[]()at (7.5,-0.8) {\begin{tabular}{c} .\\ \end{tabular}};



\end{tikzpicture}
     \vspace*{-1 \baselineskip}
    \caption{An example of a coherence block over the time-frequency plane with an arbitrary pilot distribution.}
    \label{fig:coherent_block}
\end{figure}
\subsection{Uplink Pilot Assignment} \label{sec:Uplink_pilot}

We assume the network utilizes a pilot book consisting of $\tau_p$ mutually orthogonal $\tau_p$-length frequency-domain pilot sequences, collected into a pilot set $\mathcal{S}_p = \{{\boldsymbol{s}}_{1}, {\boldsymbol{s}}_{2}, \ldots, {\boldsymbol{s}}_{\tau_p}\}$, with $\|{\boldsymbol{s}}_{t}\|^2=\tau_p$ and ${\boldsymbol{s}}_{t_1}^{\mathsf{H}} {\boldsymbol{s}}_{t_2}=0$ for $t_1 \neq t_2$. We assume each \ac{ue} uses the same pilot sequence for all coherence blocks. In each coherence block, the rest of the $(N_{c}\tau_c -\tau_p)$ channel uses are used for data transmission. When $K>\tau_p$, users have to share pilots, which causes pilot contamination~\cite{bjornson2017massive}.  
An arbitrary distribution of the $\tau_p$-length pilot in an coherence block can be implemented, as shown in Fig.~\ref{fig:coherent_block}. For an arbitrary coherence block $r$, the set of subcarrier and OFDM symbol indices used for pilot transmission is denoted by $\mathcal{N}_{p}  =\{n_1,\cdots,n_{\tau_p}\} \subseteq \mathcal{R}_r$, and $\mathcal{T}_{p}=\{\tau_1,\cdots,\tau_{\tau_p}\} \subseteq  \{1,\cdots,\tau_c\}$, respectively. The set of remaining subcarrier indices for data transmission is denoted by $\mathcal{N}_{d} = \mathcal{R}_r \backslash \mathcal{N}_{p}$. We assume all UEs use the same pilot distribution.

\ac{ue} $k$ is assigned with pilot ${\boldsymbol{s}}_{t_k} \in \mathbb{C}^{\tau_p}$ from $\mathcal{S}_p$, where we denote the index of the pilot  assigned to \ac{ue} k as $t_k \in\left\{1, \cdots, \tau_p\right\}$. \ac{ue} $k$ distributes its pilot ${\boldsymbol{s}}_{t_k}$ over pilot subcarriers $ n\in \mathcal{N}_{p}$ and OFDM symbols $\tau \in \mathcal{T}_p$, with each sample denoted by $s_{t_k,n}^{(\tau)}$. The $\tau$-th OFDM symbol of UE $k$ is ${\boldsymbol{s}}_{t_k}^{(\tau)} \in \mathbb{C}^{N}$, which consists of both pilot and data samples.  

\subsection{Signal Model}
\label{sec:signal_model}
For UE $k$, the $i$-th time-domain sample of $\tau$-th OFDM symbol is obtained by an IDFT on ${\boldsymbol{s}}_{t_k}^{(\tau)}$ as
		$\check{s}_{t_k,i}^{(\tau)} = \frac{1}{\sqrt{N}}\sum_{n=0}^{N-1} {s}_{t_k,n}^{(\tau)} e^{\frac{j2\pi ni}{N}}$,
and its vector form for the whole OFDM symbol $\tau$ is denoted by $\check{\boldsymbol{s}}^{(\tau)}_{t_k} \in \mathbb{C}^{N}$.  With the \ac{pn} $\text{diag}(e^{j\check{\boldsymbol{\theta}}_{k,l}^{(\tau)}})$, the time-domain received  signal $\check{\boldsymbol{y}}_{l}^{(\tau)} \in \mathbb{C}^{N}$ at AP $l$ for the $\tau$-th OFDM symbol is
\begin{align}
	\check{\boldsymbol{y}}_{l}^{(\tau)} = \sum\limits_{k=1}^{K} \sqrt{p_k} \text{diag}(e^{j\check{\boldsymbol{\theta}}_{k,l}^{(\tau)}})(\check{\boldsymbol{h}}_{k,l} \circledast \check{\boldsymbol{s}}^{(\tau)}_{t_k} ) + \check{\boldsymbol{\eta}}_{l}^{(\tau)},\label{eq:y_kl_TD}
\end{align}
where $p_k \geq 0$ is the transmit power of UE $k$, $\circledast$ denotes the circular convolution, $\check{\boldsymbol{\eta}}_{k,l}^{(\tau)} \sim \mathcal{N}_{\mathbb{C}}(\mathbf{0},\sigma^2\boldsymbol{I}_{N}) $ denotes the thermal noise. By applying a $N$-point DFT to both sides of \eqref{eq:y_kl_TD}, the frequency-domain received signal is
\begin{align}
		{\boldsymbol{y}}_{l}^{(\tau)} = \sum\limits_{k=1}^{K} \sqrt{p_k} \boldsymbol{J}_{k,l}^{(\tau)} \circledast ( {\boldsymbol{h}}_{k,l} \odot {\boldsymbol{s}}^{(\tau)}_{t_k} ) + {\boldsymbol{\eta}}_{l}^{(\tau)},\label{eq:y_kl_FD}
\end{align}
where $\odot$ denotes the Hadamard product, the thermal noise follows the same distribution ${\boldsymbol{\eta}}_{l}^{(\tau)} \sim \mathcal{N}_{\mathbb{C}}(\mathbf{0},\sigma^2\boldsymbol{I}_{N})$. The elements of the phase-drift  $\boldsymbol{J}_{k,l}^{(\tau)}$ are defined in~\eqref{eq:J_k_l_i_definition}.
The frequency-domain sample ${y}_{l,n}^{(\tau)}$ received over subcarrier $n$ can be decomposed as
\begin{align}
	{y}_{l,n}^{(\tau)} = \sum\limits_{k=1}^{K}&\Big(\sqrt{p_k}   {{s}}^{(\tau)}_{t_k,n}   {{h}}_{k,l,n}^{(\tau)}   +  \zeta_{k,l,n}^{(\tau)}\Big) +    {{\eta}}_{l,n}^{(\tau)}
 \label{eq:FD_recv_tau}
\end{align}
where
\begin{align} \label{eq:effective_channel}
    {{h}}_{k,l,n}^{(\tau)} \triangleq J_{k,l,0}^{(\tau)}{{h}}_{k,l,n}
\end{align}
is the effective channel (including the \ac{cpe}) and
\begin{align}
 \zeta_{k,l,n}^{(\tau)} = \sqrt{p_k} \sum\limits_{\substack{\jmath =0, \jmath \neq n}}^{N-1} {{s}}^{(\tau)}_{t_k,\jmath} J_{k,l,n-\jmath}^{(\tau)} {{h}}_{k,l,\jmath}
\end{align}
is the \ac{ici} component over subcarrier $n$ and \ac{ofdm} symbol $\tau$. Within an arbitrary coherence block $r$, the effective channels are the same only for subcarriers in the same \ac{ofdm} symbol,
\begin{align}
    {{h}}_{k,l,n_1}^{(\tau_1)} = {{h}}_{k,l,n_2}^{(\tau_1)} \neq {{h}}_{k,l,n_1}^{(\tau_2)}, \text{ for } \{n_1,n_2\} \in \mathcal{R}_{r}, \tau_1\neq \tau_2. \notag
\end{align}
Note that the \ac{cpe} coefficient $J_{k,l,0}^{(\tau)}$ is the same for all subcarriers of \ac{ofdm} symbol $\tau$. Other phase error coefficients $\{J_{k,l,i}^{(\tau)}, \, i\neq n\}$ introduce \acp{ici} on an arbitrary subcarrier $n$. 
Note also that the different \acp{cpe} break the pilot orthogonality. 
Finally, observe that utilizing the time-domain single-carrier \ac{pn} model to the frequency-domain OFDM system model~\eqref{eq:FD_recv_tau} leads to a mismatched system model as in~\cite{bjornson2015massive,papazafeiropoulos2021scalable,zheng2023asynchronous,jin2020spectral,ozdogan2019performance}.

\section{Channel Estimation and Uplink Data Transmission With Phase Noise}
We now derive a \acl{pna} channel estimator and an achievable uplink SE expression in the presence of \ac{pn} in cell-free massive MIMO networks.
\vspace*{-0.1cm}
\subsection{Channel Estimation With Phase Noise}
From~\eqref{eq:FD_recv_tau} and~\eqref{eq:effective_channel}, we see that, although the true channels follow a block fading model, the \ac{pn} make the effective channel $h_{k,l,n}^{(\tau)}$ change over each \ac{ofdm} symbol $\tau$. For each coherence block, we need to do the channel estimation for just one subcarrier of each OFDM symbol, instead of estimating the effective channel for every channel use as indicated by the single-carrier channel estimation methods in~\cite{bjornson2015massive,papazafeiropoulos2021scalable}. We now derive an estimator of the effective channel ${{h}}_{k,l,n}^{(\tau)}$ for any subcarrier $n \in \mathcal{R}_r$ and OFDM symbol $\tau \in \{1,\cdots,\tau_c\}$. The conventional \ac{mmse} estimation is hard to compute since the received signal and channel are not joint Gaussian distributed due to the \ac{pn}. Therefore, we derive a \ac{lmmse} estimator~\cite{kay1993fundamentals}. In the absence of \ac{pn}, the \ac{lmmse} estimator becomes the optimal \ac{mmse} estimator~\cite{bjornson2020scalable} since the received samples and channels are jointly Gaussian.
\begin{lemma}
The \ac{lmmse} estimate of ${h}_{k,l,n}^{(\tau)}$ based on ${\boldsymbol{y}}_{l} \triangleq [{y}_{l,n_1}^{(\tau_1)},\cdots,{y}_{l,n_{\tau_p}}^{(\tau_p)}]^{\mathsf{T}}$is
\begin{align}
    \hat{{h}}_{k,l,n}^{(\tau)} = \hat{{h}}_{k,l}^{(\tau)}=\sqrt{p_k}\beta_{k,l}{\boldsymbol{s}}_{t_k}^{\mathsf{H}} \boldsymbol{B}_{0,0}^{(\tau)} \boldsymbol{\Psi}_{l}^{-1} {\boldsymbol{y}}_{l}.
\end{align}
where
\begin{align} 
\boldsymbol{B}_{0,0}^{(\tau)}&=\mathrm{diag}\left(\Big[{B}_{0,0}^{(\tau-1)} ,\cdots, {B}_{0,0}^{(\tau-\tau_p)} \Big]^{\mathsf{T}}\right) \label{eq:LMMSE_weights_matrix}
\end{align}
\begin{align}
\boldsymbol{\Psi}_{l} &= \sum\nolimits_{k=1}^{K}p_{k}\beta_{k,l} \boldsymbol{\Phi}_{k} + \boldsymbol{Z}^{\text{ICI}}_{l} + \sigma^2 \mathbf{I}_{\tau_p} 
\\
[\boldsymbol{\Phi}_{k}]_{\tau_1, \tau_2} &= {s}^{(\tau_1)}_{t_k,n_1} {s}^{*(\tau_2)}_{t_k,n_2} {B}_{0,0}^{(\tau_1-\tau_2)}.
\end{align}
Here   ${B}_{i_1,i_2}^{(\tau_1-\tau_2)}$ and $\boldsymbol{Z}^{\text{ICI}}_{l}$  are defined in~\eqref{eq:B_matrix_compute} and ~\eqref{eq:Zeta_ICI}.
\end{lemma}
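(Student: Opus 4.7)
The plan is to invoke the standard LMMSE formula~\cite{kay1993fundamentals} in the form $\hat{h}_{k,l,n}^{(\tau)} = \boldsymbol{R}_{hy}\boldsymbol{R}_{yy}^{-1}\boldsymbol{y}_l$, with $\boldsymbol{R}_{hy} \triangleq \mathbb{E}\{h_{k,l,n}^{(\tau)}\boldsymbol{y}_l^{\mathsf{H}}\}$ and $\boldsymbol{R}_{yy} \triangleq \mathbb{E}\{\boldsymbol{y}_l\boldsymbol{y}_l^{\mathsf{H}}\}$, and to evaluate both moments using the per-subcarrier signal decomposition~\eqref{eq:FD_recv_tau} together with the phase-drift second-order statistics~\eqref{eq:B_matrix_compute}. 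Three structural facts will carry the whole computation: (i) the small-scale channels $h_{k,l,n}$ and the phase drifts $J_{k,l,i}^{(\tau)}$ are mutually independent across UE and AP indices; (ii) inside any coherence block $\mathcal{R}_r$ the channel is frequency-flat, so $\mathbb{E}\{h_{k,l,n_1}h_{k,l,n_2}^{*}\}=\beta_{k,l}$ for $n_1,n_2\in\mathcal{R}_r$ and is zero across blocks; and (iii) data symbols are zero mean and independent of channels, PN, and pilots.

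For the cross-covariance, I would compute $[\boldsymbol{R}_{hy}]_t = \mathbb{E}\{h_{k,l,n}^{(\tau)} y_{l,n_t}^{*(\tau_t)}\}$ entry by entry. Substituting~\eqref{eq:FD_recv_tau} and using~(i), only the UE-$k$ pilot term survives the sum over UEs; invoking~(ii) together with the identity $\mathbb{E}\{J_{k,l,0}^{(\tau)} J_{k,l,0}^{*(\tau_t)}\}=B_{0,0}^{(\tau-\tau_t)}$ from~\eqref{eq:B_matrix_compute}, that term reduces to $\sqrt{p_k}\beta_{k,l}s_{t_k,n_t}^{*(\tau_t)}B_{0,0}^{(\tau-\tau_t)}$. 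The UE-$k$ ICI contribution vanishes by~(iii) because every leakage summand carries a zero-mean data symbol as a factor. Stacking across $t=1,\dots,\tau_p$ and pulling out the diagonal $\boldsymbol{B}_{0,0}^{(\tau)}$ yields $\boldsymbol{R}_{hy} = \sqrt{p_k}\beta_{k,l}\boldsymbol{s}_{t_k}^{\mathsf{H}}\boldsymbol{B}_{0,0}^{(\tau)}$. Since this expression is independent of the particular $n\in\mathcal{R}_r$, the same estimate serves every subcarrier in the block, which justifies the abbreviated notation $\hat{h}_{k,l}^{(\tau)}$.

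For the auto-covariance, the pilot-pilot moment of UE $k'$ factors into $p_{k'}\beta_{k',l}s_{t_{k'},n_{t_1}}^{(\tau_{t_1})}s_{t_{k'},n_{t_2}}^{*(\tau_{t_2})}B_{0,0}^{(\tau_{t_1}-\tau_{t_2})} = p_{k'}\beta_{k',l}[\boldsymbol{\Phi}_{k'}]_{t_1,t_2}$; the pilot-ICI cross terms vanish by~(iii); thermal noise contributes $\sigma^{2}\mathbf{I}_{\tau_p}$; and the remaining ICI-ICI second moments are, by definition, packaged into $\boldsymbol{Z}^{\text{ICI}}_l$ of~\eqref{eq:Zeta_ICI}. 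Summing matches $\boldsymbol{\Psi}_l$, and combining with $\boldsymbol{R}_{hy}$ gives the stated estimator. The main obstacle will be the bookkeeping inside this last block: the term $\zeta_{k,l,n_t}^{(\tau_t)}$ is a sum over all off-subcarrier phase drifts weighted by channels on every subcarrier of its OFDM symbol, so its second moments entangle the full family $B_{i_1,i_2}^{(\tau_{t_1}-\tau_{t_2})}$ with pilot- and data-symbol statistics, and verifying that every such contribution packs cleanly into a single $\tau_p\times\tau_p$ matrix is the delicate step; everything else is a routine application of~(i)--(iii).
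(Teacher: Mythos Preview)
Your proposal is correct and follows essentially the same route as the paper: both invoke the LMMSE formula $\hat{h} = \mathbb{E}\{h\,\boldsymbol{y}^{\mathsf{H}}\}(\mathbb{E}\{\boldsymbol{y}\boldsymbol{y}^{\mathsf{H}}\})^{-1}\boldsymbol{y}$, evaluate the cross-covariance as $\sqrt{p_k}\beta_{k,l}\boldsymbol{s}_{t_k}^{\mathsf{H}}\boldsymbol{B}_{0,0}^{(\tau)}$ via independence of channels and phase drifts together with~\eqref{eq:B_matrix_compute}, and split the auto-covariance into pilot--pilot, ICI--ICI ($\boldsymbol{Z}^{\text{ICI}}_l$), and noise blocks. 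Your flagging of the ICI second-moment bookkeeping as the delicate step is precisely where the paper expends its effort, expanding $\boldsymbol{Z}^{\text{ICI}}_l$ entrywise in~\eqref{eq:Zeta_ICI}.
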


\begin{proof}
See Appendix \ref{sec:Appendix_A}.
\end{proof}
With the \ac{lmmse} estimation, 
the channel estimation $\hat{{h}}_{k,l,m}^{(\tau)}$ is with zero mean and variance $ \epsilon_{k,l,m}^{(\tau)} \triangleq p_k \beta_{k,l}^2 {\boldsymbol{s}}_{t_k,n}^{\mathsf{H}} \boldsymbol{B}_{0,0}^{(\tau)} \boldsymbol{\Psi}_{l,n}^{-1} \boldsymbol{B}_{0,0}^{\mathsf{H},(\tau)} {\boldsymbol{s}}_{t_k,n}^{\mathsf{}} $, and the channel estimation error  $\tilde{{h}}_{k,l,n}^{(\tau)}\triangleq {{h}}_{k,l,n}^{(\tau)} - \hat{{h}}_{k,l,n}^{(\tau)}$ is with zero mean and variance $ c_{k,l,m}^{(\tau)} \triangleq \beta_{k,l} - \epsilon_{k,l,m}^{(\tau)} $.

\vspace{-0.2cm}
\subsection{Uplink Data Transmission} 
%
%
%
To meet the scalability requirement of CF \acl{mmimo}, we assume that an arbitrary AP $l$ only serves a subset of UEs~\cite{bjornson2020scalable}. 
We denote the set of UEs served by AP $l$ by~\cite{bjornson2020scalable}
\begin{align}
    \mathcal{D}_{l} = \{k : d_{k,l}=1, k \in \{1,\cdots, K\} \},
\end{align}
where $d_{k,l} \in \{0,1\}$ defines  whether UE $k$ and AP $l$ communicate to each other according to the \ac{dcc} framework~\cite{bjornson2011optimality}. For an arbitrary AP $l$, the cardinality $|\mathcal{D}_l|$ is constant as $K \rightarrow \infty$ to satisfy the scalability of a CF \acl{mmimo} network.
\begin{figure*}[!t]
\begin{equation}
\text{SINR}_{k,n}^{(\tau)} = \frac{p_k \left|\mathbb{E}\left\{\boldsymbol{v}_{k,n}^{\mathsf{H},(\tau)} \boldsymbol{D}_k \boldsymbol{h}_{k,n}^{(\tau)}\right\}\right|^2}
{\sum\nolimits_{i=1 }^{K} p_i  \mathbb{E}\Big\{\left|\boldsymbol{v}_{k,n}^{\mathsf{H},(\tau)} \boldsymbol{D}_k \boldsymbol{h}_{i,n}^{(\tau)}\right|^2\Big\} - p_k \left|\mathbb{E}\Big\{\boldsymbol{v}_{k,n}^{\mathsf{H}, (\tau)} \boldsymbol{D}_k \boldsymbol{h}_{k,n}^{(\tau)}\right\}\Big|^2 + \rho^{\text{ICI},(\tau)}_{k,n} + \sigma^2\mathbb{E}\Big\{\left| \boldsymbol{D}_k \boldsymbol{v}_{k,n}^{\mathsf{H},(\tau)} \right|^2\Big\}
}\label{eq:SINR_lo_UatF}
\end{equation}
\hrule
\end{figure*}

For \ac{ue} $k$, let $s_{k,n}^{(\tau)} \in \mathbb{C}$ denote the uplink data sample over subcarrier $n \in \mathcal{N}_{\text{d}}$ and OFDM symbol $\tau \in\{1,\cdots,\tau_c\}$ with zero mean and power $p_k$. The received signals from all \acp{ap} are collected at the CPU as
\begin{align}
    \boldsymbol{y}_{n}^{(\tau)} = \sum_{k=1}^{K} \boldsymbol{h}_{k,n}^{(\tau)} s_{k,n}^{(\tau)}  + \sum_{k=1}^{K} \boldsymbol{\zeta}_{k,n}^{(\tau)} + \boldsymbol{\eta}_{n}^{(\tau)},
\end{align} 
where the concatenate channel, ICI, and thermal noise between UE $k$ and all $L$ APs over subcarrier $n$ of the $\tau$-th OFDM symbol are denoted as $\boldsymbol{h}_{k,n}^{(\tau)}=[h_{k,1,n}^{(\tau)},\cdots, h_{k,L,n}^{(\tau)}]^{\mathsf{T}}$, $\boldsymbol{\zeta}_{k,n}^{(\tau)}=[\zeta_{k,1,n}^{(\tau)},\cdots, \zeta_{k,L,n}^{(\tau)}]^{\mathsf{T}}$, and $\boldsymbol{\eta}_{n}^{(\tau)}  \sim \mathcal{N}_{\mathbb{C}}(\mathbf{0},\sigma^2\boldsymbol{I}_{L})$. The ICI $\zeta_{k,l,n}^{(\tau)}$ is defined in in~\eqref{eq:FD_recv_tau}.

The collective channel estimates and channel estimation error are defined by $\hat{\boldsymbol{h}}_{k,n}^{(\tau)}=[\hat{h}_{k,1,n}^{(\tau)},\cdots, \hat{h}_{k,L,n}^{(\tau)}]^{\mathsf{T}}$ and $\tilde{\boldsymbol{h}}_{k,n}^{(\tau)}=[\tilde{h}_{k,1,n}^{(\tau)},\cdots, \tilde{h}_{k,L,n}^{(\tau)}]^{\mathsf{T}}$ with zero means and variances $\text{diag}([\epsilon_{k,1,n}^{(\tau)} ,\cdots, \epsilon_{k,L,n}^{(\tau)}]^{\mathsf{T}})$ and  $\boldsymbol{C}_{k,n}^{(\tau)}=\text{diag}([c_{k,1,n}^{(\tau)} ,\cdots, c_{k,L,n}^{(\tau)}]^{\mathsf{T}})$, respectively. 

According to the \ac{dcc} network, only a subset of the APs participant in the signal detection.  The CPU selects a receive combining scalar $v_{k,l,n}^{(\tau)}$ for an arbitrary UE $k$ and AP $l$, and completes the estimate of $s_{k,n}^{(\tau)}$  by computing the summation
\begin{align}
    \hat{s}_{k,n}^{(\tau)}
    &= \underbrace{\boldsymbol{v}_{k,n}^{\mathsf{H},(\tau)} \boldsymbol{D}_k \boldsymbol{h}_{k,n}^{(\tau)} s_{k,n}^{(\tau)}}_{\text{Desired signal}} + \underbrace{\sum\nolimits_{\substack{i\neq k}}^{K}\boldsymbol{v}_{k,n}^{\mathsf{H},(\tau)} \boldsymbol{D}_k \boldsymbol{h}_{i,n}^{(\tau)} s_{i,n}^{(\tau)}}_{\text{Inter-user interference}}  \nonumber\\
    &+ \underbrace{\sum\nolimits_{i=1}^{K} \boldsymbol{v}_{k,n}^{\mathsf{H},(\tau)} \boldsymbol{D}_k \boldsymbol{\zeta}_{i,n}^{(\tau)}}_{\text{ICI}} 
    + \boldsymbol{v}_{k,n}^{\mathsf{H},(\tau)} \boldsymbol{D}_k \boldsymbol{\eta}_{n}^{(\tau)},
    \label{eq:s_k_n_level4}
\end{align}
where $\boldsymbol{v}_{k,n}^{(\tau)}=[v_{k,1,n}^{(\tau)} ,\cdots, v_{k,L,n}^{(\tau)}]^{\mathsf{T}}$ denotes the collective combining vector and $\boldsymbol{D}_k = \text{diag}([d_{k,1},\cdots d_{k,L}]^{\mathsf{T}})$ denotes a diagonal matrix.
\subsection{Uplink Spectral Efficiency}
The ergodic capacity is unknown for this setup with the \ac{pn}. We follow the \ac{uatf} bound in \acl{mmimo}~\cite[Th. 4.4]{bjornson2017massive} and also in~\cite{demir2021foundations,nayebi2016performance,bashar2019uplink} for CF \acl{mmimo} to give an achievable \ac{se} expression.
\begin{proposition}
An achievable SE of UE $k$ at data subcarrier $n \in \mathcal{N}_d$ is given by
\begin{align}
    \text{SE}_{k,n}^{\text{}} =\frac{1}{\tau_c} \sum_{\tau=1}^{\tau_c} \log_2(1+\text{SINR}_{k,n}^{ (\tau)}),\label{eq:SE_UatF_PN_aware}
\end{align}
where $\text{SINR}_{k,n}^{ (\tau)}$ is the eﬀective \ac{sinr} of UE $k$ over subcarrier $n$, given in~\eqref{eq:SINR_lo_UatF} with the ICI term $\rho_{k,n}^{\text{ICI},(\tau)}$ being defined in \eqref{eq:rho_ICI}.
\end{proposition}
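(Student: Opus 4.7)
The plan is to apply the \ac{uatf} bounding technique of \cite[Th. 4.4]{bjornson2017massive} to the post-combined signal $\hat{s}_{k,n}^{(\tau)}$ in \eqref{eq:s_k_n_level4}, and then average the resulting per-symbol achievable rate across the $\tau_c$ OFDM symbols of the coherence block. First, I would rewrite \eqref{eq:s_k_n_level4} by adding and subtracting $\mathbb{E}\{\boldsymbol{v}_{k,n}^{\mathsf{H},(\tau)}\boldsymbol{D}_k\boldsymbol{h}_{k,n}^{(\tau)}\}\,s_{k,n}^{(\tau)}$, so as to split $\hat{s}_{k,n}^{(\tau)}$ into a deterministic useful term of the form $a_{k,n}^{(\tau)} s_{k,n}^{(\tau)}$ with $a_{k,n}^{(\tau)} \triangleq \mathbb{E}\{\boldsymbol{v}_{k,n}^{\mathsf{H},(\tau)}\boldsymbol{D}_k\boldsymbol{h}_{k,n}^{(\tau)}\}$, plus an effective noise term collecting the channel-gain deviation, inter-user interference, ICI, and thermal noise.

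The second step is to check that the effective noise is uncorrelated with $s_{k,n}^{(\tau)}$. This follows because (i) the data symbols $\{s_{i,n}^{(\tau)}\}$ are zero-mean and mutually independent across UEs and channel uses, (ii) $s_{k,n}^{(\tau)}$ is independent of the combining vector $\boldsymbol{v}_{k,n}^{(\tau)}$ (which depends only on pilot-based estimates) and of the phase-drift coefficients $\{J_{k,l,i}^{(\tau)}\}_{i\neq 0}$ that generate the ICI on subcarrier $n$. With this orthogonality in hand, treating the effective noise as worst-case Gaussian yields the standard lower bound $\log_2(1+\mathrm{SINR}_{k,n}^{(\tau)})$ on the mutual information of channel use $(n,\tau)$, with the numerator $p_k|a_{k,n}^{(\tau)}|^2$ and with the denominator obtained by summing the second moments of the four effective-noise components. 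The multi-user and noise terms match those in \eqref{eq:SINR_lo_UatF} by standard second-moment computations, while the self-interference correction $-p_k|a_{k,n}^{(\tau)}|^2$ appears because $\mathbb{E}\{|\boldsymbol{v}_{k,n}^{\mathsf{H},(\tau)}\boldsymbol{D}_k\boldsymbol{h}_{k,n}^{(\tau)}|^2\}$ also contains the deterministic part.

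The last step is to average the per-symbol rate across the coherence block. For any data subcarrier $n\in\mathcal{N}_d$ all $\tau_c$ OFDM symbols carry information, but the effective channel $h_{k,l,n}^{(\tau)} = J_{k,l,0}^{(\tau)} h_{k,l,n}$ changes with $\tau$ through the \ac{cpe}, so the coding argument is done symbol by symbol; summing the $\tau_c$ per-symbol rates and normalizing by $\tau_c$ gives \eqref{eq:SE_UatF_PN_aware}.

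The main obstacle is the ICI term $\rho_{k,n}^{\text{ICI},(\tau)}$: since $\zeta_{k,l,n}^{(\tau)}$ depends on the channels $\{h_{k,l,\jmath}\}_{\jmath\neq n}$ and on the phase-drifts $\{J_{k,l,n-\jmath}^{(\tau)}\}$, one must carefully combine (i) the frequency-flatness of $h_{k,l,\jmath}$ within the coherence block with its decorrelation across different blocks, and (ii) the joint statistics of the phase-drifts given by \eqref{eq:B_matrix_compute}, in order to show that the ICI is uncorrelated with both the desired signal and the combining vector and to obtain the closed-form expression used to define $\rho_{k,n}^{\text{ICI},(\tau)}$ in \eqref{eq:rho_ICI}.
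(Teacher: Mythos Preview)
Your proposal is correct and follows essentially the same route as the paper: the paper also adds and subtracts $\mathbb{E}\{\boldsymbol{v}_{k,n}^{\mathsf{H},(\tau)}\boldsymbol{D}_k\boldsymbol{h}_{k,n}^{(\tau)}\}s_{k,n}^{(\tau)}$, verifies uncorrelation of the resulting interference term with $s_{k,n}^{(\tau)}$ via the independence of the zero-mean data symbols, computes $\mathbb{E}\{|\nu_{k,n}^{(\tau)}|^2\}$ term by term, and then averages over the $\tau_c$ OFDM symbols. The only point worth noting is that for the ICI term the paper does not fully exploit the subtlety you flag (same channel within a coherence block, independent across blocks); it simply treats the channels on subcarriers $\jmath\neq n$ as random variables ``to reduce computation complexity'' and thereby obtains the closed form $\lambda_{i,n,l}^{(\tau)}=p_i\beta_{i,l}(1-B_{0,0}^{(0)})$.
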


\begin{proof}
    See Appendix~\ref{sec:Appendix_SINR_lo_UatF}.
\end{proof}

Note that the  \ac{sinr} in~\eqref{eq:SE_UatF_PN_aware} is different for each OFDM symbol $\tau$ because of \ac{cpe} introduced by the time-domain \ac{pn}. 

The SE expression in~\eqref{eq:SE_UatF_PN_aware} can be computed numerically for any combiner  $\boldsymbol{v}_{k,n}^{(\tau)}$ using Monte Carlo methods. 
In the context of cell-free massive MIMO, common combiners are represented by \ac{mr}, \ac{lpmmse}, \ac{mmse}, and \ac{pmmse} combinings, given as~\cite[Eqs. (19),(29), (23), (20)]{bjornson2020scalable}
\begin{align}
    \boldsymbol{v}_{k,n}^{\text{MR},(\tau)} &= \boldsymbol{D}_k \hat{\boldsymbol{h}}_{k,n}^{(\tau)} 
    \\
        \boldsymbol{v}_{k,l,n}^{\text{LP-MMSE},(\tau)} &=  
     p_k \Big(\sum\limits_{i\in \mathcal{D}_l}^{} p_i  |\hat{{h}}_{i,l,n}^{(\tau)}|^2 + c_{i,l,n}^{(\tau)} + \sigma^2 \Big)^{\dagger}  \hat{{h}}_{i,l,n}^{(\tau)}\label{eq:v_k_LPMMSE}\\
    \boldsymbol{v}_{k,n}^{\text{PMMSE},(\tau)} &=  
     p_k \Big(\sum\limits_{i\in \mathcal{P}_k}^{} p_i\hat{\boldsymbol{H}}_{i,n}^{D,(\tau)}    + \boldsymbol{Z}_{i,n}^{\prime (\tau)} \Big)^{\dagger} \boldsymbol{D}_k \hat{\boldsymbol{h}}_{i,n}^{(\tau)}\label{eq:v_k_PMMSE}\\
    \boldsymbol{v}_{k,n}^{\text{MMSE},(\tau)} 
    &= p_k \Big(\sum\limits_{i=1}^{K} p_i \hat{\boldsymbol{H}}_{i,n}^{D,(\tau)}  + \boldsymbol{Z}_{i,n}^{(\tau)} \Big)^{\dagger} \boldsymbol{D}_k \hat{\boldsymbol{h}}_{i,n}^{(\tau)},
    \label{eq:v_k_MMSE}
\end{align}
where $\hat{\boldsymbol{H}}_{i,n}^{D,(\tau)} =\boldsymbol{D}_k \hat{\boldsymbol{h}}_{i,n}^{(\tau)} \hat{\boldsymbol{h}}_{i,n}^{\mathsf{H},(\tau)} \boldsymbol{D}_k$, $\mathcal{P}_k = \{i: \boldsymbol{D}_k \boldsymbol{D}_i \neq \mathbf{0}_L  \}$, $\boldsymbol{Z}_{i,n}^{\prime (\tau)}=\boldsymbol{D}_k\big(\sum_{i\in \mathcal{P}_k}^K p_i \boldsymbol{C}_{i,n}^{(\tau)}+\sigma_{\text{}}^2 \boldsymbol{I}_{L}\big) \boldsymbol{D}_k$, and $\boldsymbol{Z}_{i,n}^{(\tau)}=\boldsymbol{D}_k\big(\sum_{i=1}^K p_i \boldsymbol{C}_{i,n}^{(\tau)}+\sigma_{\text{}}^2 \boldsymbol{I}_{L}\big) \boldsymbol{D}_k$. We refer to a combining scheme \acl{pna} or \acl{pnu}, depending on the usage of \acl{pna} or \acl{pnu} channel estimators, respectively. 


\section{Numerical Results} 
Numerical results are now given to show the advantage of the proposed \acl{pna} \ac{lmmse} channel estimator over other estimators in a cell-free  \ac{ofdm} network. 

\subsection{Scenario}
We consider a simulation scenario where $L=200$ \ac{ap} and $K=10$ \acp{ue} are independently and uniformly distributed in a $1\times 1$\,km square, all equipped with single-antenna. This approximates an infinitely large network with $200$ antennas/km$^2$ and $10$ \acp{ue}/km$^2$. We use the same propagation model as in~\cite{bjornson2020scalable} with spatially correlated fading. We assume that the coherence time and bandwidth are $T_c=1$ ms and $W_c=180$ kHz, respectively, which fits an coherence block setup of $N_{c}=12$ subcarriers with subcarrier spacing $\Delta f=15$ kHz and $\tau_c=15$ OFDM symbols. In total, each coherence block contains $180$ samples with $\tau_p=12$ pilot samples and $(N_{c}\tau_c-\tau_p)=168$ data samples. We distribute each pilot sequence to the first subcarrier of $\tau_p=12$ OFDM symbols, i.e., $\mathcal{N}_p=\{0\}$ and $\mathcal{T}_p=\{1,\cdots,12\}$. Each OFDM symbol contains $N=1200$ subcarriers which leads to a signal bandwidth $W=18$ MHz and symbol time $T_s\approx 5.6\times 10^{-8}$s. We assume all \acp{ap} and \acp{ue} have the same quality \acp{lo}, with the same level but worse quality than that in~\cite{bjornson2015massive,papazafeiropoulos2021scalable}, i.e., $\gamma_{\phi}=\gamma_{\varphi}=4\times 10^{-17}$, which leads to a \ac{pn} variance $\sigma^2_{\phi}=\sigma^2_{\varphi}\approx 3.5\times 10^{-4}$ by setting $f_c=2$ GHz.

\subsection{Results and Discussion}
Fig.~\ref{fig:SE_vs_OFDM_Symbol} illustrates the relation of the uplink \acp{se} per \ac{ue} to
channel uses in the first coherence block for two combining schemes (MR and MMSE) with the same generative model defined in this paper but three different channel estimators: 
\acl{pnu} \ac{mmse}~\cite{bjornson2020scalable} (marked by circles), single-carrier \acl{pna} \ac{lmmse}~\cite{papazafeiropoulos2021scalable} (named by PNA-SC and marked by triangles), and the proposed OFDM \acl{pna} \ac{lmmse} (named by PNA-OFDM and marked by squares).  We save  space to not show the results of LP-MMSE and PMMSE combiners as they imply similar messages. The ideal case of MR and MMSE combinings with no \ac{pn}
 are also shown. We notice that both combining schemes with both \acl{pna} estimators have substantial \ac{se} gains over the same combining schemes with conventional \acl{pnu} MMSE estimator for channel uses within the $\tau_p=12$ pilot \ac{ofdm} symbols, i.e., channel use $\leq 144$, while the SEs drop quickly as the channel use $> 144$, where the channel aging effects caused by the \ac{pn} become stronger because the chosen pilot distribution has no pilot samples for \ac{ofdm} symbol $\tau>\tau_p$. The proposed \acl{pna} \ac{lmmse} estimator perform better than the single-carrier \acl{pna} \ac{lmmse}, especially for the \ac{mmse} combiner. It is interesting to see only the combining schemes with the \acl{pnu} estimator have a convex shape, i.e., SEs are better in the middle than the beginning and the end. This can be explained that the \acl{pna} estimator gives different weights for pilot samples in different OFDM symbols, i.e., $\boldsymbol{B}_{0,0}^{(\tau)}$ in~\eqref{eq:LMMSE_weights_matrix}, while the \acl{pnu} MMSE estimator gives the same weights, which happen to fit better for channel uses around $(\tau_pN_c)/2=72$.

Fixing the same setup as in Fig.~\ref{fig:SE_vs_OFDM_Symbol} but varying the number of \acp{ue} $K$, we evaluate the corresponding \acp{se} at channel use $60$ in Fig.~\ref{fig:SE_vs_K}. We notice that all \acp{se} decrease as $K$ grows because both the \ac{ici} and inter-user interference caused by the \ac{pn} increase with $K$ as we indicate in Section~\ref{sec:Uplink_pilot}. The \ac{mmse} combining with the proposed \acl{pna} \ac{lmmse} estimator performs the best but eventually degrades to the same as the \ac{mmse} (Aware-SC) combining due to strong pilot contamination.

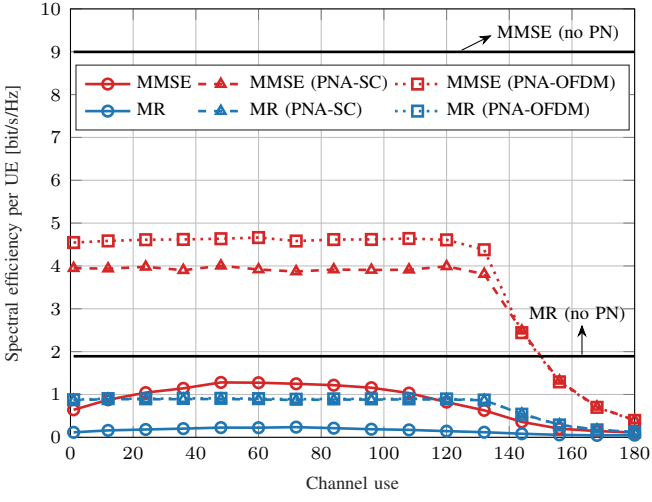
\begin{figure}[t]
    \centering
    \vspace*{-0 \baselineskip}
	\begin{tikzpicture}[font=\scriptsize]
\definecolor{color0}{rgb}{0.12156862745098,0.466666666666667,0.705882352941177}
\definecolor{color1}{rgb}{1,0.498039215686275,0.0549019607843137}
\definecolor{color2}{rgb}{0.172549019607843,0.627450980392157,0.172549019607843}
\definecolor{color3}{rgb}{0.83921568627451,0.152941176470588,0.156862745098039}
\definecolor{color4}{rgb}{0.580392156862745,0.403921568627451,0.741176470588235}
\definecolor{color5}{rgb}{0.549019607843137,0.337254901960784,0.294117647058824}
\definecolor{color6}{rgb}{0.890196078431372,0.466666666666667,0.76078431372549}
\definecolor{color7}{rgb}{0.737254901960784,0.741176470588235,0.133333333333333}

\begin{axis}[%
width=7.5cm,
height=5.7cm,
at={(1.011in,0.642in)},
scale only axis,
xmin=0,
xmax=180,
xlabel style={font=\color{white!15!black}},
xlabel={\scriptsize Channel use},
ymin=0,
ymax=10,
ylabel style={font=\color{white!15!black}},
ylabel={\scriptsize Spectral efficiency per UE [bit/s/Hz]},
axis background/.style={fill=white},
title style={font=\bfseries},
xmajorgrids,
ymajorgrids,
legend style={at={(0.5,0.87)}, anchor=north, legend cell align=left, draw=white!15!black},
legend columns=3,
ytick distance={1}
]

\addplot [color=color3, line width=1pt, mark=o, mark options={solid, fill opacity=0}]
  table[row sep=crcr]{%
1	0.642320411951921\\
12	0.876218363852657\\
24	1.04312233714234\\
36	1.14293864300465\\
48	1.2809535350124\\
60	1.27470684722615\\
72	1.24932393991151\\
84	1.21674621767509\\
96	1.15813308199387\\
108	1.03274387306323\\
120	0.82145084080571\\
132	0.629731204380451\\
144	0.359849705285755\\
156	0.202185551640939\\
168	0.145013434904593\\
180	0.111699012797653\\
};
\addlegendentry{\scriptsize MMSE}

\addplot [color=color3, dashed, line width=1pt, mark=triangle, mark options={solid, color3}]
  table[row sep=crcr]{%
1	3.94778148231953\\
12	3.9375759491392\\
24	3.97864576937671\\
36	3.9009428463476\\
48	4.00441471261337\\
60	3.9173031533269\\
72	3.87021471574344\\
84	3.9160328387626\\
96	3.90627845886249\\
108	3.91184741204198\\
120	3.99024260849791\\
132	3.81097237982507\\
144	2.48743283518225\\
156	1.31774281701019\\
168	0.710668465099348\\
180	0.400535962778692\\
};
\addlegendentry{\scriptsize MMSE (PNA-SC)}

\addplot [color=color3, dotted, line width=1pt, mark=square, mark options={solid, fill opacity=0}]
  table[row sep=crcr]{%
1	4.54651301434491\\
12	4.58454012369535\\
24	4.61232141473384\\
36	4.61914462301599\\
48	4.63526241500425\\
60	4.66263550116481\\
72	4.58170630796274\\
84	4.61599307716861\\
96	4.61834351843125\\
108	4.64046437607206\\
120	4.60907092632158\\
132	4.37805883967515\\
144	2.44299289713311\\
156	1.29424440554473\\
168	0.700088556794035\\
180	0.393922808147848\\
};
\addlegendentry{\scriptsize MMSE (PNA-OFDM)}

\addplot [color=color0, line width=1pt, mark=o, mark options={solid, fill opacity=0}]
  table[row sep=crcr]{%
1	0.116246001401522\\
12	0.161507520287562\\
24	0.182401648387922\\
36	0.202893235915402\\
48	0.226523852837164\\
60	0.224577463849236\\
72	0.237626982023782\\
84	0.213647144422007\\
96	0.189131709401636\\
108	0.173228255550383\\
120	0.14379453813342\\
132	0.118827765216967\\
144	0.0833315555995425\\
156	0.0565785180110599\\
168	0.0449696507105068\\
180	0.0501702529486685\\
};
\addlegendentry{\scriptsize MR}

\addplot [color=color0, dashed, line width=1pt, mark=triangle, mark options={solid, fill opacity=0}]
  table[row sep=crcr]{%
1	0.865737913154621\\
12	0.891014031164551\\
24	0.888031976852213\\
36	0.891350563397722\\
48	0.89454881748121\\
60	0.883410945865308\\
72	0.871518456145022\\
84	0.881153924223839\\
96	0.879583401070633\\
108	0.885397686672748\\
120	0.881956296335488\\
132	0.865254176358221\\
144	0.547661093453543\\
156	0.293867575802791\\
168	0.175622214978817\\
180	0.124245263525132\\
};
\addlegendentry{\scriptsize MR (PNA-SC)}

\addplot [color=color0, dotted, line width=1pt, mark=square, mark options={solid, fill opacity=0}]
  table[row sep=crcr]{%
1	0.877416356022781\\
12	0.90290687149126\\
24	0.899060500468328\\
36	0.907668553886638\\
48	0.909184900882173\\
60	0.902315184403527\\
72	0.888527961832466\\
84	0.896785747168448\\
96	0.896861215590468\\
108	0.897533966987854\\
120	0.895184422472021\\
132	0.870769589034787\\
144	0.539055326063333\\
156	0.288545398388139\\
168	0.173996202087842\\
180	0.123625444506468\\
};
\addlegendentry{\scriptsize MR (PNA-OFDM)}

\addplot [color=black, line width=1pt]
  table[row sep=crcr]{%
1	8.99639822357104\\
12	8.99639822357104\\
24	8.99639822357104\\
36	8.99639822357104\\
48	8.99639822357104\\
60	8.99639822357104\\
72	8.99639822357104\\
84	8.99639822357104\\
96	8.99639822357104\\
108	8.99639822357104\\
120	8.99639822357104\\
132	8.99639822357104\\
144	8.99639822357104\\
156	8.99639822357104\\
168	8.99639822357104\\
180	8.99639822357104\\
};

\addplot [color=black, solid, line width=1.0pt]
  table[row sep=crcr]{%
1	1.89217045688083\\
13	1.89217045688083\\
25	1.89217045688083\\
37	1.89217045688083\\
49	1.89217045688083\\
61	1.89217045688083\\
73	1.89217045688083\\
85	1.89217045688083\\
97	1.89217045688083\\
109	1.89217045688083\\
121	1.89217045688083\\
133	1.89217045688083\\
145	1.89217045688083\\
157	1.89217045688083\\
169	1.89217045688083\\
181	1.89217045688083\\
};

\draw[-Stealth] (5.2cm, 5.15cm)   -- (5.6cm, 5.35cm) node[midway, above right=-0.1cm and 0.15cm] {\scriptsize MMSE (no PN)};

\draw[-Stealth] (6.8cm, 1.1cm)   -- (6.8cm, 1.5cm) node[midway, above left=0.1cm and -0.7cm] {\scriptsize MR (no PN)};

\end{axis}

\end{tikzpicture}%
    \vspace*{0.25 \baselineskip}
    \caption{Uplink SE per UE versus the OFDM symbol $\tau$ for two combining schemes with the proposed \acl{pna} \ac{lmmse}, single-carrier \acl{pna} \ac{lmmse}, and \acl{pnu} \ac{mmse} estimators for $\sigma^2_{\phi}=\sigma^2_{\varphi}= 3.5\times 10^{-4}$.}
    \label{fig:SE_vs_OFDM_Symbol}
     \vspace*{-0.5 \baselineskip}
\end{figure}
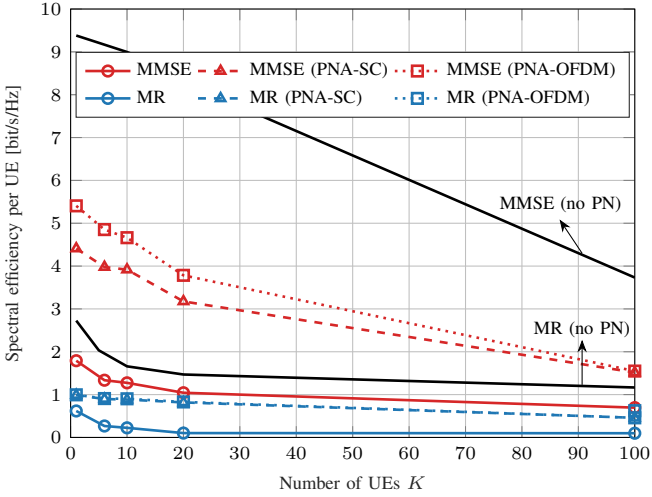
\begin{figure}[t]
    \centering
        \vspace*{-0 \baselineskip}
	\begin{tikzpicture}[font=\scriptsize]
\definecolor{color0}{rgb}{0.12156862745098,0.466666666666667,0.705882352941177}
\definecolor{color1}{rgb}{1,0.498039215686275,0.0549019607843137}
\definecolor{color2}{rgb}{0.172549019607843,0.627450980392157,0.172549019607843}
\definecolor{color3}{rgb}{0.83921568627451,0.152941176470588,0.156862745098039}
\definecolor{color4}{rgb}{0.580392156862745,0.403921568627451,0.741176470588235}
\definecolor{color5}{rgb}{0.549019607843137,0.337254901960784,0.294117647058824}
\definecolor{color6}{rgb}{0.890196078431372,0.466666666666667,0.76078431372549}
\definecolor{color7}{rgb}{0.737254901960784,0.741176470588235,0.133333333333333}

\begin{axis}[%
width=7.5cm,
height=5.7 cm,
at={(0,0)},
scale only axis,
xmin=0,
xmax=100,
xlabel style={font=\color{white!15!black}},
xlabel={\scriptsize Number of UEs $K$},
ymin=0,
ymax=10,
ylabel style={font=\color{white!15!black}},
ylabel={\scriptsize Spectral efficiency per UE [bit/s/Hz]},
axis background/.style={fill=white},
title style={font=\bfseries},
xmajorgrids,
ymajorgrids,
legend style={at={(0.5,0.9)}, anchor=north, legend cell align=left, draw=white!15!black},
legend columns=3,
ytick distance={1},
xtick distance={10}
]

\addplot [color=color3, line width=1pt, mark=o, mark options={solid, fill opacity=0}]
  table[row sep=crcr]{%
1	1.79127578437014\\
6	1.33576770086422\\
10	1.27470684722615\\
20	1.04324985602345\\
100	0.696061468465811\\
};
\addlegendentry{\scriptsize MMSE}

\addplot [color=color3, dashed, line width=1pt, mark=triangle, mark options={solid, color3}]
  table[row sep=crcr]{%
1	4.40921119293327\\
6	3.97809968280554\\
10	3.9173031533269\\
20	3.17493757372631\\
100	1.51626745904381\\
};
\addlegendentry{\scriptsize MMSE (PNA-SC)}

\addplot [color=color3, dotted, line width=1pt, mark=square, mark options={solid, fill opacity=0}]
  table[row sep=crcr]{%
1	5.40601976170866\\
6	4.8520003448414\\
10	4.66263550116481\\
20	3.78411779928758\\
100	1.54910851289621\\
};
\addlegendentry{\scriptsize MMSE (PNA-OFDM)}

\addplot [color=color0, line width=1pt, mark=o, mark options={solid, fill opacity=0}]
  table[row sep=crcr]{%
1	0.617962179152191\\
6	0.26725735078873\\
10	0.224577463849236\\
20	0.101767631551631\\
100	0.09840643004294\\
};
\addlegendentry{\scriptsize MR}

\addplot [color=color0, dashed, line width=1pt, mark=triangle, mark options={solid, fill opacity=0}]
  table[row sep=crcr]{%
1	0.989763147219384\\
6	0.897217329098751\\
10	0.883410945865308\\
20	0.817602082463886\\
100	0.458310490585065\\
};
\addlegendentry{\scriptsize MR (PNA-SC)}

\addplot [color=color0, dotted, line width=1pt, mark=square, mark options={solid, fill opacity=0}]
  table[row sep=crcr]{%
1	0.995127852044789\\
6	0.907781769896994\\
10	0.902315184403527\\
20	0.828984771658365\\
100	0.458310490585065\\
100	0.624574368338142\\
};
\addlegendentry{\scriptsize MR (PNA-OFDM)}

\addplot [color=black, line width=1pt]
  table[row sep=crcr]{
1	9.38132826884431\\
5	9.21091874271483\\
10	8.99639822357104\\
12	8.8946400377732\\
15	8.7621054932628\\
20	8.29296500233821\\
100	3.73193245641315\\
};

\addplot [color=black, line width=1.0pt, solid]
  table[row sep=crcr]{%
1	2.72688806319375\\
5	2.03342274912325\\
10	1.65967449943746\\
20	1.46924816657056\\
100 1.16607505195431\\
};

\draw[-Stealth] (6.8cm, 2.45cm)   -- (6.5cm, 2.9cm) node[midway, above left=0.2cm and -0.8cm] {\scriptsize MMSE (no PN)};

\draw[-Stealth] (6.8cm, 0.7cm)   -- (6.8cm, 1.3cm) node[midway, above = 0.21cm] {\scriptsize MR (no PN)};

\end{axis}

\end{tikzpicture}%
     \vspace*{0.25 \baselineskip}
    \caption{Uplink SE per UE versus the number of \acp{ue} $K$ for two combining schemes with the proposed \acl{pna} \ac{lmmse}, single-carrier \acl{pna} \ac{lmmse}, and \acl{pnu} \ac{mmse} estimators for $\sigma^2_{\phi}=\sigma^2_{\varphi}= 3.5\times 10^{-4}$.}
    \label{fig:SE_vs_K}
         \vspace*{0.25 \baselineskip}
\end{figure}

\section{Conclusion}
In this paper, we derived a signal model of \ac{pn}-impaired cell-free massive MIMO \ac{ofdm} networks and proposed a novel \ac{pn}-aware \ac{lmmse} channel estimator, which estimates any aging channel in the coherence block caused by the \ac{pn}. We derived a new uplink achievable \ac{se} expression considering the \ac{ici} from all \acp{ue}. Numerical results  demonstrate the advantage of the proposed \acl{pna} \ac{lmmse} estimator over both \acl{pnu} and a single-carrier \acl{pna} estimators for different receiving schemes.
\appendices{}
\section{Proof of Lemma 1}
\label{sec:Appendix_A}
The general expression for the \ac{lmmse}  estimator is~\cite{kay1993fundamentals}
$\hat{{h}}_{k,l,n}^{(\tau)}=\mathbb{E}  \{{{h}}_{k,l,n}^{(\tau)} {\boldsymbol{y}}_{l}^{\mathsf{H}} \} \big(\mathbb{E}\{ {\boldsymbol{y}}_{l} {\boldsymbol{y}}_{l}^{\mathsf{H}}\}\big)^{-1} {\boldsymbol{y}}_{l}$,
where we have 
\begin{align}
    \mathbb{E}  \{{{h}}_{k,l,n}^{(\tau)} & {\boldsymbol{y}}_{l}^{\mathsf{H}} \}= \sqrt{p_k}\beta_{k,l} \times \nonumber\\ 
    &\big[{{s}}^{*(\tau_1)}_{t_k,n_1}\mathbb{E}\{J_{k,l,0}^{(\tau)}J_{k,l,0}^{*(\tau_1)}\},\cdots,{{s}}^{*(\tau_p)}_{t_k,n_{\tau_p}}\mathbb{E}\{J_{k,l,0}^{(\tau)}J_{k,l,0}^{*(\tau_p)}\}   \big]\nonumber\\
    &\quad\;\;\, = \sqrt{p_k}\beta_{k,l}{\boldsymbol{s}}_{t_k}^{\mathsf{H}} \boldsymbol{B}_{0,0}^{(\tau)},
\end{align}
where $\boldsymbol{B}_{0,0}^{(\tau)}=\text{diag}[\mathbb{E}\{J_{k,l,0}^{(\tau)}J_{k,l,0}^{*(\tau_1)}\},\cdots, \mathbb{E}\{J_{k,l,0}^{(\tau)}J_{k,l,0}^{*(\tau_{p})}\}]$ and its $\tau^{\prime}$ element ${B}_{0,0}^{(\tau-\tau^{\prime})} \triangleq \mathbb{E}\{J_{k,l,0}^{(\tau)}J_{k,l,0}^{*(\tau^{\prime})}\}$ is calculated following~\eqref{eq:B_matrix_compute}. 

Furthermore, we have
\begin{align}
    &\mathbb{E}  \left\{{\boldsymbol{y}}_{l} {\boldsymbol{y}}_{l}^{\mathsf{H}}
    \right\}= \sigma^2 \boldsymbol{I}_{\tau_p}  \nonumber\\
&+ \mathbb{E} \big\{ \sum\nolimits_{k=1}^{K}\sqrt{p_k} \big[
 {{s}}^{(\tau_1)}_{t_k,n_1}J_{k,l,0}^{(\tau_1)},\cdots, {{s}}^{(\tau_p)}_{t_k,n_{\tau_p}}J_{k,l,0}^{(\tau_p)}
 \big]^{\mathsf{T}}
 {{h}}_{k,l,n}  \nonumber\\
&\,\,\,\,\,\,\,\, \times {{h}}_{k,l,n}^{*} \sqrt{p_k} \big[
 {{s}}^{(\tau_1)}_{t_k,n_1}J_{k,l,0}^{*(\tau_1)},\cdots, {{s}}^{(\tau_p)}_{t_k,n_{\tau_p}}J_{k,l,0}^{*(\tau_p)}
 \big]
 \big\}  \nonumber
 \\
 &+ \mathbb{E} \Big \{ \underbrace{\Big[\sum\nolimits_{k=1}^{K}\zeta_{k,l,n_1}^{(\tau_1)},\cdots, \sum\nolimits_{k=1}^{K}\zeta_{k,l,n_{\tau_p}}^{(\tau_p)}\Big]^{\mathsf{T}}}_{\triangleq \boldsymbol{\zeta}_{k,l}} \boldsymbol{\zeta}_{k,l}^{\mathsf{H}} 
 \Big\} \nonumber
 \nonumber\\
 &= \underbrace{\sum\nolimits_{k=1}^{K}p_{k}\beta_{k,l} \boldsymbol{\Phi}_{k} + \boldsymbol{Z}^{\text{ICI}}_{l} + \sigma^2 \boldsymbol{I}_{\tau_p}}_{\triangleq \boldsymbol{\Psi}_{l}}, 
\end{align}
where  the $(\tau_1,\tau_2)$ element of $ \boldsymbol{\Phi}_{k} \in \mathbb{C}^{\tau_p \times \tau_p}$ is 
\begin{align}
    [\boldsymbol{\Phi}_{k}]_{\tau_1, \tau_2} = {s}^{(\tau_1)}_{t_k,n_1} {s}^{*(\tau_2)}_{t_k,n_2} {B}_{0,0}^{(\tau_1-\tau_2)}.
\end{align} 
The $(\tau_1, \tau_2)$ element of the ICI component $\boldsymbol{Z}^{\text{ICI}}_{l} \in \mathbb{C}^{\tau_p \times \tau_p}$ is 
\begin{align}
    &[\boldsymbol{Z}^{\text{ICI}}_{l}]_{\tau_1, \tau_2}  \nonumber \\
    &= \mathbb{E}\Big\{\sum_{k=1}^K \zeta_{k,l,n_1}^{(\tau_1)} \sum_{k^{\prime}=1}^K \zeta_{k^{\prime},l,n_{2}}^{*(\tau_2)}\Big\} =
    \sum\limits_{k=1}^K \mathbb{E}\left\{ \zeta_{k,l,n_1}^{(\tau_1)} \zeta_{k^{},l,n_2}^{*(\tau_2)}\right\}\nonumber\\
    &=
    \sum_{k=1}^K p_k \beta_{k,l} \sum_{\substack{\jmath_1 \neq n_1}}^{N-1} \sum_{\substack{\jmath_2 \neq n_2}}^{N-1}\mathbb{E} \{ {{s}}^{(\tau_1)}_{t_k,\jmath_1} {{s}}^{*(\tau_2)}_{t_k,\jmath_2} J^{(\tau_1)}_{k,l,n_1-\jmath_1} J^{*(\tau_2)}_{k,l,n_2-\jmath_2} \} \nonumber\\
    &= 
    \sum_{k=1}^K p_k \beta_{k,l} \Big(\sum_{\substack{\jmath_1 \in \mathcal{N}_{p} \\ \jmath_1 \neq n_1}}^{N-1} \sum_{\substack{\jmath_2 \in \mathcal{N}_{p} \\ \jmath_2 \neq n_2}}^{N-1} {{s}}^{(\tau_1)}_{t_k,\jmath_1} {{s}}^{*(\tau_2)}_{t_k,\jmath_2}  {B}_{n_1-\jmath_1,n_2-\jmath_2}^{(\tau_1-\tau_2)} \nonumber\\
&\qquad\qquad
+ \sum\nolimits_{\substack{\jmath_1 \in \mathcal{N}_{d} \\ \jmath_1 \neq n_1}}^{N-1} \sum\nolimits_{\substack{\jmath_2 \in \mathcal{N}_{d} \\ \jmath_2 \neq n_2}}^{N-1}  {B}_{n_1-\jmath_1,n_2-\jmath_2}^{(\tau_1-\tau_2)}\Big)\label{eq:Zeta_ICI},
\end{align}
where the correlation term ${B}_{n_1-\jmath_1,n_2-\jmath_2}^{(\tau_1-\tau_2)}$ is calculated following \eqref{eq:B_matrix_compute}.

\section{Proof of Proposition 1}\label{sec:Appendix_SINR_lo_UatF}

Since the effective channels vary with each \ac{ofdm} symbol $\tau$, we follow the reference~\cite[Theorem 4.4]{demir2021foundations} using the \ac{uatf} bound to obtain an achievable SE for data subcarrier $n \in \mathcal{N}_d$ at  \ac{ofdm} symbol $\tau \in \{1 ,\cdots, \tau_c\}$. These achievable SEs are averaged over all $\tau_c$ \ac{ofdm} symbols to obtain~\eqref{eq:SE_UatF_PN_aware}. 

Specifically, by adding and subtracting the average effective channel $\mathbb{E}\left\{\boldsymbol{v}_{k,n}^{\mathsf{H},(\tau)} \boldsymbol{D}_k \boldsymbol{h}_{k,n}^{(\tau)}\right\}$, \eqref{eq:s_k_n_level4} is rewritten as
\begin{align}
    \hat{s}_{k,n}^{(\tau)} &= \mathbb{E}\left\{\boldsymbol{v}_{k,n}^{\mathsf{H},(\tau)} \boldsymbol{D}_k \boldsymbol{h}_{k,n}^{(\tau)}\right\}  s_{k,n}^{(\tau)}   + \nu_{k,n}^{(\tau)},
\end{align}
where the interference term is
\begin{align}
    &\nu_{k,n}^{(\tau)} = \Big(\boldsymbol{v}_{k,n}^{\mathsf{H},(\tau)} \boldsymbol{D}_k \boldsymbol{h}_{k,n}^{(\tau)}  - \mathbb{E}\left\{\boldsymbol{v}_{k,n}^{\mathsf{H},(\tau)} \boldsymbol{D}_k \boldsymbol{h}_{k,n}^{(\tau)}\right\}\Big)s_{k,n}^{(\tau)} +  \\& {\sum_{\substack{i=1 \\i\neq k}}^{K}\boldsymbol{v}_{k,n}^{\mathsf{H},(\tau)} \boldsymbol{D}_k \boldsymbol{h}_{i,n}^{(\tau)} s_{i,n}^{(\tau)}} 
    + {\sum_{i=1}^{K} \boldsymbol{v}_{k,n}^{\mathsf{H},(\tau)} \boldsymbol{D}_k \boldsymbol{\zeta}_{i,n}^{(\tau)}}
    + \boldsymbol{v}_{k,n}^{\mathsf{H},(\tau)} \boldsymbol{D}_k \boldsymbol{\eta}_{n}^{(\tau)}.\nonumber 
\end{align}
This can be viewed as a deterministic channel with a gain $\mathbb{E}\{\boldsymbol{v}_{k,n}^{\mathsf{H},(\tau)} \boldsymbol{D}_k \boldsymbol{h}_{k,n}^{(\tau)}\}$ and additive interference plus noise term $\nu_{k,n}^{(\tau)}$ that has zero mean. Note that $\nu_{k,n}^{(\tau)}$ is uncorrelated with the desired signal $s_{k,n}^{(\tau)}$ due to the independence  between each of the zero-mean symbols $s_{k,n}^{(\tau)}$, i.e., $\mathbb{E}\{s_{i,n}^{(\tau)} s_{j,n}^{(\tau)} \}=\mathbb{E}\{s_{i,n}^{(\tau)} s_{i,\jmath}^{(\tau)} \}=0$ for $n,j\in\mathcal{N}_d$.
The denominator of the SINR in~\eqref{eq:SINR_lo_UatF} is obtained by
\begin{align}
    \mathbb{E}&\{|\nu_{k,n}^{(\tau)}|^2\} 
    = \sum\nolimits_{i=1 }^{K} p_i  \mathbb{E}\Big\{\left|\boldsymbol{v}_{k,n}^{\mathsf{H},(\tau)} \boldsymbol{D}_k \boldsymbol{h}_{i,n}^{(\tau)}\right|^2\Big\}   +{{\rho}^{\text{ICI},(\tau)}_{k,n}}  \nonumber\\
    &\;\;\;  - p_k \left|\mathbb{E}\Big\{\boldsymbol{v}_{k,n}^{\mathsf{H}, (\tau)} \boldsymbol{D}_k \boldsymbol{h}_{k,n}^{(\tau)}\right\}\Big|^2 + \sigma^2\mathbb{E}\Big\{\left|\boldsymbol{v}_{k,n}^{(\tau)} \boldsymbol{D}_k \right|^2\Big\}.
\end{align}
Here, the ICI term ${\rho}^{\text{ICI},(\tau)}_{k,n}$ is computed as
\begin{align}
    \rho^{\text{ICI},(\tau)}_{k,n} \nonumber 
    &= \sum\nolimits_{i=1}^{K} \mathbb{E}\left\{ \boldsymbol{v}_{k,n}^{\mathsf{H},(\tau)} \boldsymbol{D}_k \boldsymbol{\zeta}_{i,n}^{(\tau)}    \boldsymbol{\zeta}_{i,n}^{\mathsf{H},(\tau)} \boldsymbol{D}_k^{} \boldsymbol{v}_{k,n}^{(\tau)} \right\} \nonumber\\
    &= \sum\nolimits_{i=1}^{K} \mathbb{E}\left\{ \boldsymbol{v}_{k,n}^{\mathsf{H},(\tau)} \boldsymbol{D}_k  \text{diag}(\boldsymbol{\lambda}_{i,n}^{(\tau)}) \boldsymbol{D}_k^{} \boldsymbol{v}_{k,n}^{(\tau)} \right\},\label{eq:rho_ICI}
\end{align}
where the $l$-th element of the ICI power $\boldsymbol{\lambda}_{i,n}^{(\tau)}\in \mathbb{C}^{L}$ is computed as
\begin{align}
{\lambda}_{i,n,l}^{(\tau)} &= \sum\nolimits_{\substack{ \jmath \neq n}}^{N-1} \mathbb{E}\{|s^{(\tau)}_{i,\jmath}|^2\} \mathbb{E}\{|{h}_{i,l,\jmath}|^2\}  \mathbb{E}\{|{{J}}_{i,l,n-\jmath}^{(\tau)}|^2\}   \nonumber \\
& = p_i \beta_{i,l}  \sum\nolimits_{\substack{ \jmath \neq n}}^{N-1} {B}_{n-j,n-j}^{(0)}
\nonumber\\
    &=p_i \beta_{i,l} (1-{B}_{0,0}^{(0)}),\label{eq:define_lamda}
\end{align}
where we view the unknown channels over subcarriers other than $n$ as random variables instead of realizations to reduce computation complexity.  In the ideal case of no \ac{pn}, ${\lambda}_{i,n,l}^{(\tau)}=0$ and $\rho_{k,n}^{\text{ICI},(\tau)}=0$, which turns the \ac{sinr} and \ac{se} expressions in~\eqref{eq:SINR_lo_UatF} and~\eqref{eq:SE_UatF_PN_aware} to be the same as in~\cite{bjornson2020scalable}. 

\balance

\bibliographystyle{IEEEtran}
\bibliography{reference_list}
\end{document}